\documentclass[11pt]{article}
\usepackage[top=1.16in, bottom=1.16in,left=1.16in,right=1.16in]{geometry}
\geometry{letterpaper}                   
\usepackage{graphicx}
\usepackage{amsmath,amssymb, endnotes, amsfonts, amsthm}
\usepackage{epstopdf}
\DeclareGraphicsRule{.tif}{png}{.png}{`convert #1 `dirname #1`/`basename #1 .tif`.png}

\usepackage{setspace}
\doublespacing

\def\sA{{\cal A}}

\def\sF{{\cal F}}

\def\bI{\mathbb{I}}
\def\R{\mathbb{R}}
\def\bF{\mathbb{F}}
\def\E{\mathbb{E}}

\def\bP{\mathbb{P}}

\def\bVN{\mathbb{V}^{(N)}}     
\def\VNinv[t]{\left(\bVN_t\right)^{-1}}
\def\sgn{{\text{sign}}}          
\def\hatlambda{\widehat\lambda}
\def\Shat{\widehat S}          
\def\lambdamax{\lambda^{\text{max}}}

\renewcommand{\labelenumi}{(\alph{enumi})}

\numberwithin{equation}{section}
\theoremstyle{plain}                
\newtheorem{theorem}{Theorem}[section]
\newtheorem{lemma}[theorem]{Lemma}
\newtheorem{proposition}[theorem]{Proposition}

\theoremstyle{definition}           
\newtheorem{definition}[theorem]{Definition}
\newtheorem{example}[theorem]{Example}

\theoremstyle{remark}               


\begin{document}

\begin{center}
{\large \bf Price impact equilibrium with transaction costs and TWAP trading} \\ 
Eunjung Noh and Kim Weston\footnote{The authors would like to thank Xiao Chen and Kasper Larsen for helpful discussions on this work. The second author acknowledges support by the National Science Foundation under Grant No. DMS\#1908255 (2019-2022). Any opinions, findings and conclusions or recommendations expressed in this material are those of the author and do not necessarily reflect the views of the National Science Foundation (NSF).}
\ \\
Rutgers University \\
Department of Mathematics
\ \\ \ \\

\today
\end{center}
\vskip .5in

\abstract{We prove the existence of an equilibrium in a model with transaction costs and price impact where two agents are incentivized to trade towards a target.  The two types of frictions -- price impact and transaction costs -- lead the agents to two distinct changes in their optimal investment approach:  price impact causes agents to continuously trade in smaller amounts, while transaction costs cause the agents to cease trading before the end of the trading period.  As the agents lose wealth because of transaction costs, the exchange makes a profit.  We prove the existence of a strictly positive optimal transaction cost from the exchange's perspective.}

\vskip .3in
\noindent{\textit{Keywords:} Transaction costs, Price impact, Equilibrium, Targeted trading, TWAP, Incompleteness, Trading frictions}\\
\noindent{\textit{Mathematics Subject Classification (2010):} 91B51, 91B25}

\section{Introduction}
We study a financial equilibrium model with frictions stemming from both transaction costs and price impact.  Two agents are incentivized to trade towards a targeted number of shares throughout the trading period.  In equilibrium, the agents seek to maximize their expected wealth minus a penalty for deviating from their targets.  Their wealth is further reduced by transaction costs and is affected by the perceived price impact on the stock price from their trades.

Incomplete equilibrium is notoriously difficult to study.  When the incompleteness stems from frictions, this difficulty is exacerbated.  This work proposes a tractable model for a financial equilibrium with two simultaneous frictions.  We seek to answer two questions:
\begin{enumerate}\renewcommand{\labelenumi}{(\arabic{enumi})}
  \item \emph{How do transaction costs and price impact affect prices and strategies in equilibrium?}
  \item \emph{What is the optimal level of transaction costs?}
\end{enumerate}
The main contribution of this work is that the effect of transaction costs in equilibrium is distinct from that of price impact.  Price impact affects the equilibrium stock price and depresses the rate of trade, whereas transaction costs cause the agents to cease trading early in the trading period.  We also prove the existence of a strictly positive optimal level of transaction costs, where optimality is determined from the perspective of the market's exchange. The exchange collects the transaction fees as the agents trade.



The two frictions that we focus on are transaction costs and price impact.  When studied individually, transaction cost equilibrium models often resort to simplifications in order to draw conclusions.  Our model shares this approach, as it is simple enough to remain tractable while complex enough to capture differing effects of our two frictions.  Weston~\cite{W18MAFE} proves the existence of a transaction cost equilibrium in a tractable model with deterministic equilibrium annuity prices.  Continuum-of-agent models, where market clearing is averaged over infinitely many agents, are studied in Vayanos and Vila~\cite{VV99ET}, Vayanos~\cite{V98RFS}, Huang~\cite{H03JET}, and D\'avila~\cite{D20wp}.  Herdegen and Muhle-Karbe~\cite{HMK18FS} study equilibrium with transaction costs where clearing holds approximately, up to a leading order.  For proportional transaction costs, Gonon et.\,al.~\cite{GMKS19wp} study equilibrium with an ergodic objective.  Lo et.\,al.~\cite{LMW04JPE} and Buss et.\,al.~\cite{BUV14wp} study the numerics behind transaction cost equilibria without establishing the existence of an equilibrium.  In contrast, we prove the existence of an equilibrium with proportional transaction costs (and price impact) in a standard setting with two of agents, market clearing, and consumption at a terminal time.

Price impact and optimal liquidation models for a single agent take the price impact form as given and allow agents to consider how the size and timing of their trades will impact the traded asset price and, hence, future wealth.  Our equilibrium model endogenizes price impact, which is realized in the stock's drift in the form proposed in Cuoco and Cvitani{\'{c}}~\cite{CC98JEDC}.  We incorporate price impact in our model by modeling it through off-equilibrium price paths.  Off-equilibrium paths describe how prices react to trading strategies even though a given trading strategy may be suboptimal.  The optimal trading strategy determines the on-equilibrium price path.  Both agents' on-equilibrium price dynamics must coincide with the equilibrium stock price.  In this way, price impact affects equilibrium prices, even though it is only modeled through each individual agent's perceived off-equilibrium price paths.

Other models have endogenized price impact in an equilibrium setting using various approaches.  Kyle~\cite{K85E} uses a game theoretic framework, where the market maker attempts to filter out private information from the aggregate trades of noise traders and an informed trader.  Vayanos~\cite{V99RES} and Sannikov and Skrzypacz~\cite{SS16wp} use private trading target information to endogenize price impact.  Choi et.\,al.~\cite{CLS19wp} endogenize price impact by allowing off-equilibrium price paths to vary in a Cuoco and Cvitani{\'{c}}~\cite{CC98JEDC}-sense while also allowing their agents' trading targets to be private.

Our modeling set-up is most similar to Choi et.\,al~\cite{CLS19wp}, who study targeted trading in Nash equilibria with price impact.  We introduce transaction costs into a reduced form of their setting 
in order to compare the effects of both frictions (price impact and transaction costs) in equilibrium. 
Our model shares similarities with others in terms of single-agent models and equilibrium settings.  Linear-quadratic models with trading targets are studied in several works, including Bank et.\,al.~\cite{BSV17MAFE}, Sannikov and Skrzypacz~\cite{SS16wp}, and Vo\ss~\cite{V19wp}. Gonon et.\,al.~\cite{GMKS19wp} use linear-quadratic controls and allow for proportional transaction costs, rather than a quadratic approximation to transaction costs, as in Brunnermeier and Pedersen~\cite{BP05JF}, G\^arleanu and Pedersen~\cite{GP16JET}, and Bouchard et.\,al.~\cite{BFHMK18FS}.

Every transaction fee paid by the agents is income for the market's exchange.  Higher transaction costs generate more income for the exchange for every share traded, but lower transaction costs induce the agents to trade a higher volume of shares.  Consequently, we prove that there exists a strictly positive level of transaction costs that maximizes the exchange's expected profit.  Optimal transaction costs have been of interest starting most notably with the introduction of the Tobin tax in Tobin~\cite{T78EEJ}.  Previous equilibrium approaches consider optimality from a welfare perspective.  The continuum of agents in D\'avila~\cite{D20wp} differ in their beliefs about the dividend's distribution.  The agents' belief difference versus the central planner's choice of distribution when calculating welfare leads to a strictly positive optimal transaction tax.  In Weston~\cite{W18MAFE}, the welfare decreases as transaction costs increase, leading to zero as the welfare optimizing transaction cost parameter.  In our model, the agents are identical in their beliefs and differ only in their trading targets.  Nonetheless, we prove the existence of a strictly positive optimal transaction cost from the exchange's perspective.

The paper is organized as follows.  Section~\ref{section:setting} describes our model inputs.  Section~\ref{section:eq} presents our main result, Theorem~\ref{thm:main}, which establishes the existence of a price impact equilibrium with transaction costs.  The choice of an optimal transaction cost from the perspective of the exchange is presented in Section~\ref{section:optimal-lambda}.  The proofs are contained in Section~\ref{section:proofs}.



\section{The model}\label{section:setting}

Let $T>0$ be a fixed time horizon, which we think of as one trading day in length.  We work in a continuous-time setting and let $B=\{B_t\}_{t\in[0,T]}$ be a Brownian motion on a probability space $(\Omega,\sF,\bP)$.  The market consists of two traded securities:  a bank account and stock.  The bank account is a financial asset in zero-net supply with a constant zero interest rate.  The stock is in constant positive net supply with the supply denoted by $n>0$.  It pays a dividend of $D$ at time $T$.  The random variable $D$ is measurable with respect to $\sigma\left(B_u\!:\, 0\leq u\leq T\right)$ and $\E[D^2]<\infty$.  We let $\sigma$ be the progressively measurable process such that $\E\int_0^T\sigma_{u}^2du<\infty$ and
$$
  D= \E[D] + \int_0^T \sigma_{u}dB_u
$$
guaranteed by the martingale representation theorem.  The equilibrium stock price will be determined endogenously in equilibrium and is denoted $\Shat = \{\Shat_t\}_{t\in[0,T]}$.  The price is constrained at time $T$ so that $\Shat_T = D$.  We assume that all goods are denominated in a single consumption good.

Two investors, $i=1,2$, trade in the market.   They each seek to maximize expected wealth yet are subjected to inventory penalties throughout the trading period.  Their wealth is further penalized by transaction costs, which are proportional to the rate of trade at the rate $\lambda>0$.  Their wealth is indirectly penalized by perceived price impact from trades.  For every share purchased, the agents perceive that the stock's drift decreases linearly.

Each agent $i$ has a target number of shares $a_i$ she wishes to acquire (or sell off) throughout the trading period.  The random variables $a_1$ and $a_2$ are assumed to satisfy $\E[a_i^2]<\infty$ and be independent of the Brownian motion $B$.  Agent $i$'s filtration $\bF_i = \{\sF_{i,t}\}_{t\in[0,T]}$ is given by
$$
  \sF_{i,t} := \sigma\left(a_i, \Shat_u:\, u\in[0,t]\right), \ \ \ \ t\in[0,T].
$$

  A trading strategy $\theta = \{\theta_t\}_{t\in[0,T]}$ denotes the number of shares held in stock.  For $i\in\{1,2\}$, we say that $\theta$ is \textit{admissible for agent $i$} if it is adapted to $\bF_i$, c\`adl\`ag, of finite variation on $[0,T]$ $\bP$-a.s., and satisfies $\E\int_0^T \left(\sigma_{t}\theta_t\right)^2dt <\infty$.  We write $\sA_i$ to denote the collection of agent $i$'s admissible strategies.  Agent $i$ is endowed at the beginning of the trading period with $n/2$ shares of stock. We normalize the shares in the bank account to zero.  We allow for $\theta_0$ to differ from $n/2$, as the agents may choose to trade a lump sum immediately.  In the absence of transaction costs or the penalty term given in~\eqref{def:L} below, the agents' allocations would be Pareto optimal.  However, the presence of frictions and penalties will motivate the agents to deviate from their initial positions.
  
  For $i=1,2$, since $\theta\in\sA_i$ is of finite variation, we decompose $\theta$ into 
  \begin{equation}\label{eqn:decomp}
    \theta_t = \frac{n}{2} + \theta^\uparrow_t - \theta^\downarrow_t, \ \ \ \ t\in[0,T],
  \end{equation}
  where $\theta^\uparrow, \theta^\downarrow$ are adapted to $\bF_i$, c\`adl\`ag, nondecreasing, 
  and 
\begin{equation}\label{eqn:trade-condition}
  \{t\in[0,T]:d\theta^\uparrow_t>0\}\cap\{t\in[0,T]:d\theta^\downarrow_t>0\}=\emptyset.
\end{equation}
  A change in trading position is possible at time $0$, and we allow for $\theta^\uparrow_0>0$ or $\theta^\downarrow_0>0$ as long as~\eqref{eqn:trade-condition} holds.

At the close of the trading period, agents consume their acquired dividends.  They are  subjected through their optimization problems to inventory penalties throughout the trading period.  For $i\in\{1,2\}$ and a given $\theta\in\sA_i$, the penalty term, or loss term, for agent $i$ is measured by
\begin{equation}\label{def:L}
  L_{i,T}^\theta:= \frac12\int_0^T \kappa(t)\left(\gamma(t)\left(a_i-\frac{n}{2}\right) - \left(\theta_t-\frac{n}{2}\right)\right)^2dt.
\end{equation}
The function $\kappa:(0,T)\rightarrow (0,\infty)$ describes the intensity of the penalty, while $\gamma:[0,T]\rightarrow[0,1]$ describes the desired intraday trading target trajectory.  Both agents share the same deterministic functions $\kappa$ and $\gamma$.  We assume that $\gamma$ is c\`adl\`ag, nonnegative, bounded in absolute value by one, and nondecreasing.  Our main example is time-weighted average price (TWAP), where the intraday trajectory function is $\gamma^{\text{TWAP}}(t):= t/T$.  We assume that $\kappa$ is measurable and $\int_0^T\kappa(t)dt$ is finite.

The agents perceive a price impact as the result of their trades.  For $i=1,2$ and a trading strategy $\theta\in\sA_i$, we model this impact via the off-equilibrium stock price's drift by
\begin{equation}\label{eqn:price-impact}
  dS_{i,t}^\theta = \kappa(t)\left(c_0(t;a_1+a_2) -c_1 \theta_t 
    + \gamma(t)c_2\left(a_i-\frac{n}{2}\right)\right)dt + dM_t,
    \ \ \ S_0\in\R.
\end{equation}
The function $c_0$ and constants $c_1$, $c_2$ will be determined in equilibrium and are the same for both agents.  The constant $c_1$ will be a free parameter that determines the level of price impact in the market.  The $c_1=0$ case corresponds to an equilibrium without price impact, where the agents are price-takers.  Following the work of Choi et.\,al.~\cite{CLS19wp} without transaction costs, we work with off-equilibrium stock prices whose martingale term $M$ and initial value $S_0$ are independent of $\theta$ and $i$.

Price impact in~\eqref{eqn:price-impact} is realized through the drift of the stock as in Cuoco and Cvitani{\'{c}}~\cite{CC98JEDC}.  Larger values for $c_1$ result in more price impact because the more an agent buys, the more she drives her perceived future prices down.  While traditional price impact models, such as Almgren and Chriss~\cite{AC01JR}, affect the stock price directly, our variation of price impact affects the future returns of the stock by depressing them when a trader seeks a larger market share.

For agent $i\in\{1,2\}$ and a trading strategy $\theta\in\sA_i$, agent $i$'s perceived wealth process is given by
\begin{equation}\label{def:wealth}
  X^\theta_{i,t} = \frac{n}{2}S_{0} + \int_{0}^t \theta_u dS^\theta_{iu} - \lambda\left(
  \theta^\uparrow_t + \theta^\downarrow_t\right),
  \ \ \ \ t\in[0,T].
\end{equation}
We recall that the decomposition of $\theta$ in \eqref{eqn:decomp} allows for $\theta^\uparrow_0$ and $\theta^\downarrow_0$ to differ from zero.  Both frictions -- price impact and transaction costs -- are at play in the perceived wealth dynamics.  Agent $i$'s objective is
$$
  \E\left[X^\theta_{i,T} - L^\theta_{i,T}\,\vert\,\sF_{i,0}\right] \longrightarrow \ \text{max}
$$
over $\theta\in\sA_i$, where $L^\theta_{i,T}$ is defined in~\eqref{def:L} and $X^\theta_{i,T}$ in~\eqref{def:wealth}.

\section{Equilibrium}\label{section:eq}
In an equilibrium, the stock price is determined so that markets clear when both agents  invest optimally.  The equilibrium stock price must agree with both agents' perceived prices when the optimal strategies are applied.
\begin{definition} \label{def:eq}
Let $\lambda>0$ be a given transaction cost level.  Trading strategies $\theta_1\in\sA_1$ and $\theta_2\in\sA_2$, a price process $\Shat = \{\Shat_t\}_{t\in[0,T]}$, and price impact coefficients $c_0, c_1, c_2$ form a \textit{price impact equilibrium} if
\begin{enumerate}
  \item \textit{Strategies are optimal:} For $i=1,2$, we have that
  \begin{equation}\label{eqn:optimality}
    \E\left[X^{\theta_i}_{i,T} - L^{\theta_i}_{i,T}\,\vert\,\sF_{i,0}\right] 
    = \sup_{\theta\in\sA_i}\E\left[X^\theta_{i,T} - L^\theta_{i,T}\,\vert\,\sF_{i,0}\right],
  \end{equation}
  where $L^\theta_{i,T}$ is defined in~\eqref{def:L}, $X^\theta_{i,T}$ in~\eqref{def:wealth}, and the off-equilibrium price impact stock dynamics are given in~\eqref{eqn:price-impact} with coefficients $c_0$, $c_1$, and $c_2$.
  \item \textit{Markets clear:} We have $\theta_{1,t} + \theta_{2,t} = n$ for all $t\in[0,T]$.
  \item \textit{Prices are consistent:} The equilibrium stock price process $\Shat$ is an It\^o process with $\Shat_T=D$ and for $t\in\{u\in[0,T]: d\theta^\uparrow_{1,u} + d\theta^\downarrow_{1,u} >0\}$, we have that
  $$
    S^{\theta_1}_{1,t} = S^{\theta_2}_{2,t} = \Shat_t.
  $$
  The price impact stock dynamics of $S^{\theta_1}_{1}$ and $S^{\theta_2}_{2}$ are given in~\eqref{eqn:price-impact} with coefficients $c_0$, $c_1$, and $c_2$.
\end{enumerate}
\end{definition}
Even though off-equilibrium, the agents perceive a price impact from their trades, the on-equilibrium stock price must agree with the agents' perceived prices when their optimal strategies are applied.  Item (c) of Definition~\ref{def:eq} requires the perceived prices to agree with the realized equilibrium price \textit{when trade occurs}.  (Since there are only two agents in the model, trade occurs if and only if agent 1 trades.) 
Therefore, in equilibrium, the two agents may have different perceived prices at times when they do not choose to trade.  This requirement on perceived prices in equilibrium is similar to employing shadow prices in equilibrium since an equilibrium stock price can only be uniquely identified when trade occurs; see Weston~\cite{W18MAFE}.

Market clearing in Definition~\ref{def:eq}(b) requires clearing of the stock market, however Walras' Law holds in our model in that the other markets (money market and real goods) clear as well.  
For $i=1,2$ and a given strategy $\theta\in\sA_i$ and equilibrium stock price $\Shat$, we define the realized wealth in equilibrium through its self-financing condition (see~\eqref{def:wealth}) by
$$
  \widehat{X}^{\theta}_t := \frac{n}{2}\Shat_0 + \int_0^t \theta_u d\Shat_u - \lambda\left(\theta_t^\uparrow+\theta_t^\downarrow\right), 
  \ \ \ \ t\in[0,T].
$$
We define the corresponding holdings in the money market account by
\begin{equation}\label{def:mm}
  \theta^{(0)}_t := \widehat X^\theta_t -\theta_t\Shat_t, \ \ \ \ t\in[0,T].
\end{equation}
We recall that the interest rate has been normalized to zero, since consumption only occurs at one point in time.

Lemma~\ref{lemma:walras} presents a version of Walras' Law applied to a price impact equilibrium with transaction costs.  Its proof is presented in Section~\ref{section:proofs}. Lemma~\ref{lemma:walras} shows that the money market account provides the mechanism by which transaction costs exit the economy.
\begin{lemma}\label{lemma:walras}
  For a given transaction cost $\lambda>0$, let a price impact equilibrium satisfying Definition~\ref{def:eq} be given with optimal stock holdings $\theta_1$, $\theta_2$ and equilibrium stock price $\Shat$.  For $i=1,2$, we let $\theta^{(0)}_{i,t}$ correspond to the equilibrium money market holdings at $t$ of agent $i$ with stock market strategy $\theta_i$, as in~\eqref{def:mm}.  Then the money market clears,
  \begin{equation*}\label{eqn:mma-clearing}
    \theta^{(0)}_{1,t} + \theta^{(0)}_{2,t}
    = - \lambda\left(\theta_{1,t}^\uparrow+\theta_{1,t}^\downarrow+\theta_{2,t}^\uparrow+\theta_{2,t}^\downarrow\right),
    \ \ \ \ t\in[0,T],
  \end{equation*}
  and the consumption market clears at $T$,
  \begin{equation*}\label{eqn:cons-clearing}
    \theta_{1,T}\Shat_T +\theta_{2,T}\Shat_T 
    = nD.
  \end{equation*}
\end{lemma}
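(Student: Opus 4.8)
The plan is to verify both clearing identities by direct computation from the definitions, the only inputs being stock-market clearing from Definition~\ref{def:eq}(b) and the terminal price constraint $\Shat_T=D$ from Definition~\ref{def:eq}(c); in particular the optimality condition~\eqref{eqn:optimality} plays no role.

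For the money market, I would sum the two agents' realized equilibrium wealth processes. Writing $\widehat X^{\theta_i}_t=\frac n2\Shat_0+\int_0^t\theta_{i,u}\,d\Shat_u-\lambda(\theta^\uparrow_{i,t}+\theta^\downarrow_{i,t})$ and using the definition of the money-market holdings in~\eqref{def:mm},
\begin{equation*}
  \theta^{(0)}_{1,t}+\theta^{(0)}_{2,t}
  =\widehat X^{\theta_1}_t+\widehat X^{\theta_2}_t-(\theta_{1,t}+\theta_{2,t})\Shat_t .
\end{equation*}
Both stochastic integrals are well defined since $\theta_i\in\sA_i$ and $\Shat$ is an It\^o process, and by linearity together with the clearing condition $\theta_{1,u}+\theta_{2,u}=n$ one has $\int_0^t\theta_{1,u}\,d\Shat_u+\int_0^t\theta_{2,u}\,d\Shat_u=\int_0^t n\,d\Shat_u=n(\Shat_t-\Shat_0)$. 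Hence
\begin{equation*}
  \widehat X^{\theta_1}_t+\widehat X^{\theta_2}_t
  =n\Shat_t-\lambda\bigl(\theta^\uparrow_{1,t}+\theta^\downarrow_{1,t}+\theta^\uparrow_{2,t}+\theta^\downarrow_{2,t}\bigr),
\end{equation*}
and substituting this and $(\theta_{1,t}+\theta_{2,t})\Shat_t=n\Shat_t$ into the previous display, the $n\Shat_t$ terms cancel, yielding exactly the asserted money-market identity for every $t\in[0,T]$. For consumption clearing at $T$, combining $\theta_{1,T}+\theta_{2,T}=n$ with $\Shat_T=D$ gives $\theta_{1,T}\Shat_T+\theta_{2,T}\Shat_T=(\theta_{1,T}+\theta_{2,T})\Shat_T=nD$.

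The computation is essentially bookkeeping, so there is no serious obstacle; the one point worth care is that the realized wealth $\widehat X^{\theta}$ is defined through the equilibrium price $\Shat$ rather than through either agent's perceived off-equilibrium price $S^{\theta}_i$, so the possible disagreement of perceived and equilibrium prices at non-trading times (cf.\ the discussion following Definition~\ref{def:eq}) never enters the argument. One should also note that integrating the constant process $n$ against the continuous semimartingale $\Shat$ yields $n(\Shat_t-\Shat_0)$, and that the identities are to be read pathwise (up to a null set) for all $t\in[0,T]$, with continuity of $\Shat$ making any lump-sum trades at $t=0$ harmless for the integral term.
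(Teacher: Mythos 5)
Your proof is correct and follows essentially the same route as the paper's: sum the two realized wealth processes, use stock-market clearing to reduce the integrand to the constant $n$ so that $n\Shat_0+\int_0^t n\,d\Shat_u-n\Shat_t=0$, and combine clearing with $\Shat_T=D$ for the consumption market. The additional remarks about realized versus perceived prices are accurate but not needed for the argument.
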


To begin constructing our equilibrium, for each $i\in\{1,2\}$, we let 
$$
  a_\Sigma:=a_1+a_2 \ \ \text{ and } \ \ A_i:= a_i-\frac12 a_\Sigma.
$$
The random variables $A_i$ describe the deviation of the trading targets $a_i$ from the aggregate target $a_\Sigma$.  We note that $A_1+A_2 = 0$.

The presence of transaction costs causes the agents to stop trading before the end of the trading period.  To this end, we define the last trading time $\tau$ by
\begin{equation}\label{def:tau}
  \tau:= \inf\left\{t\in[0,T]:\, \lvert A_1\rvert\frac{1+2c_1}{1+c_1}\int_t^T \kappa(u)\Big(\gamma(u)-\gamma(t)\Big)du\leq\lambda\right\}.
\end{equation}
The time $\tau$ is a random variable valued in $[0,T)$.  We also define the random variable $\chi$ by
\begin{equation}\label{def:chi}
  \chi:= \frac{|A_1|(1+2c_1)}{1+c_1} \int_0^T \kappa(u)\gamma(u)du.
\end{equation}
The magnitude of $\chi$ will determine if trade occurs in the model or if the agents are deterred from trading by prohibitively high transaction costs.

The following theorem is our main result.  The proof of Theorem~\ref{thm:main} can be found in Section~\ref{section:proofs}.
\begin{theorem} \label{thm:main}
Let $\lambda>0$ and $c_1>-\frac12$ be given.  Suppose that $\kappa:(0,T)\rightarrow(0,\infty)$ is measurable with $\int_0^T\kappa(u)du<\infty$ and that $\gamma:[0,T]\rightarrow[0,1]$ is c\`adl\`ag, nonnegative, bounded by one, and nondecreasing.  There exists a price impact equilibrium where the price impact stock dynamics in~\eqref{eqn:price-impact} have coefficients $c_0$ and $c_2$ given in terms of $c_1$ by
\begin{equation}\label{def:cs}
  c_0(t) := c_1 n -\frac{1+2c_1}{2(1+c_1)}\gamma(t)(a_\Sigma-n)
  \ \ \text{ and } \ \ 
  c_2 := \frac{c_1}{1+c_1}.
\end{equation}
For $i=1,2$, the equilibrium holdings $\theta_i\in\sA_i$ are given for $t\in[0,T]$ by
\begin{equation}\label{def:theta}
  \theta_{i,t}:= 
  \begin{cases}
    \frac{n}{2}+\frac{A_i}{1+c_1}\,\gamma(t), & t<\tau,\\
    \frac{n}{2}+\frac{1}{\int_\tau^T\kappa(u)du}\left\{\int_\tau^T\kappa(u)\gamma(u)\frac{A_i}{1+c_1}du - \frac{\lambda\sgn(A_i)}{1+2c_1}\right\}, & t\geq\tau \text{ and } \chi>\lambda,\\
    \frac{n}{2}, & t\geq\tau \text{ and } \chi\leq\lambda,
  \end{cases}
\end{equation}
where $\tau$ is defined in~\eqref{def:tau}, $\chi$ is defined in~\eqref{def:chi}, and the $\sgn$ function convention is $\sgn(0)=0$ so that
$$
\sgn(x):=
\begin{cases}
  -1, & x<0,\\
  0, & x=0,\\
  1, & x>0.
\end{cases}
$$ 
We let $\Shat=\{\Shat_t\}_{t\in[0,T]}$ be defined by
\begin{equation}\label{def:S-hat}
  \Shat_t := \E[D] + \int_0^t\sigma_{u}dB_u + \frac12\int_t^T \kappa(u)\Big(\gamma(u)(a_\Sigma-n)-c_1 n\Big)du.
\end{equation}
Then, $\Shat$, $\theta_1$, $\theta_2$, $c_0$, $c_1$, and $c_2$ form a price impact equilibrium.  Moreover, the agents' filtrations agree in that $\sF_{1,t} = \sF_{2,t}$ for all $t\in[0,T]$.
\end{theorem}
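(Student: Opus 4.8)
The plan is to verify, for the explicit data $(\Shat,\theta_1,\theta_2,c_0,c_1,c_2)$ in the statement, the three requirements of Definition~\ref{def:eq} — optimality~(a), market clearing~(b), price consistency~(c) — together with the filtration identity. Items~(b),~(c) and the filtration claim reduce to direct substitution into~\eqref{def:cs}--\eqref{def:S-hat}; the substance is the optimality~(a), which I would turn into a pathwise deterministic convex minimization and solve in closed form.

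I would first dispatch (b), (c) and the filtration claim. Since $\tau$ in~\eqref{def:tau} and $\chi$ in~\eqref{def:chi} depend on the agents only through $|A_1|=|A_2|$, and $\sgn(A_1)=-\sgn(A_2)$, summing~\eqref{def:theta} over $i=1,2$ and using $A_1+A_2=0$ gives $\theta_{1,t}+\theta_{2,t}=n$ in each of the three branches, which is~(b). For~(c) I would compute the drift of $\Shat$ from~\eqref{def:S-hat} and the drift of $S^\theta_i$ from~\eqref{eqn:price-impact} after inserting~\eqref{def:cs} and $a_i-\tfrac n2=A_i+\tfrac12(a_\Sigma-n)$; the two drifts agree precisely when $\theta_{i,t}=\tfrac n2+\tfrac{A_i}{1+c_1}\gamma(t)$, which holds for all $t<\tau$. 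Choosing the off-equilibrium initial value $S_0:=\Shat_0$ and martingale part $M:=\int_0^\cdot\sigma_u\,dB_u$, it follows that $S^{\theta_i}_i\equiv\Shat$ on $[0,\tau]$ (continuity at $\tau$ holds as only the integrand of the drift jumps there), and since agent~$1$ trades only on $[0,\tau]$, the perceived prices coincide with $\Shat$ at every trading time; that $\Shat$ is an It\^o process with $\Shat_T=D$ is immediate from~\eqref{def:S-hat}. Finally $\Shat_0$ encodes $(a_\Sigma-n)\int_0^T\kappa\gamma\,du$, hence $a_\Sigma$ (the case $\int_0^T\kappa\gamma\,du=0$ being degenerate, $\gamma\equiv0$ with no trade), after which $\Shat_u$ reveals $\int_0^u\sigma\,dB$, so $\sF_{1,t}=\sF_{2,t}$ for all $t$.

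For optimality, fix $i$ and an arbitrary $\theta\in\sA_i$ and put $\psi_t:=\theta_t-\tfrac n2$. Since the $a_j$ are independent of $B$, $M=\int_0^\cdot\sigma\,dB$ is a square-integrable martingale in $\bF_i$, so admissibility gives $\E\big[\int_0^T\theta\,dM\,\big|\,\sF_{i,0}\big]=0$. Substituting~\eqref{def:cs} into the drift of $S^\theta_i$, expanding and completing the square — which uses $c_1>-\tfrac12$ so that the coefficient $c_1+\tfrac12$ is positive — gives
\[
  \E\big[X^\theta_{i,T}-L^\theta_{i,T}\,\big|\,\sF_{i,0}\big]=C_i-\E\big[F(\psi)\,\big|\,\sF_{i,0}\big],
\]
where $F(\psi):=(c_1+\tfrac12)\int_0^T\kappa(u)\big(\psi_u-\tfrac{A_i}{1+c_1}\gamma(u)\big)^2du+\lambda(\theta^\uparrow_T+\theta^\downarrow_T)$ is pathwise, $\theta^\uparrow_T+\theta^\downarrow_T$ is the total variation of $\psi$ on $[0,T]$, and $C_i$ is $\sF_{i,0}$-measurable, a.s.\ finite and independent of $\theta$ (so the objective is always $<\infty$, and competitors with $F(\psi)=+\infty$ need no attention). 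Thus it suffices to prove the pathwise inequality $F(\psi)\ge F(\psi^\ast)$, $\psi^\ast:=\theta_i-\tfrac n2$ being read off~\eqref{def:theta}, and for this I would use convex duality with the explicit adjoint
\[
  q(s):=-(1+2c_1)\int_s^T\kappa(u)\Big(\psi^\ast_u-\tfrac{A_i}{1+c_1}\gamma(u)\Big)du,\qquad s\in[0,T].
\]
The design of~\eqref{def:theta} — exact tracking of the target $\tfrac{A_i}{1+c_1}\gamma$ on $[0,\tau)$, a position frozen on $[\tau,T]$ at the $\kappa$-weighted average of the target over $[\tau,T]$ shifted by $\tfrac{\lambda\,\sgn(A_i)}{1+2c_1}$, and the null position when $\chi\le\lambda$ — makes $q$ an admissible dual variable: (i) $\|q\|_\infty\le\lambda$; and (ii) on the support of $d\psi^\ast$ one has $q=\lambda\,\sgn(A_i)$ while $d\psi^\ast$ has sign $\sgn(A_i)$ ($\psi^\ast$ being monotone, with empty trading support when $\chi\le\lambda$ or $A_i=0$). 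Granting (i)--(ii): expanding the square in the quadratic part of $F$ around $\psi^\ast$ and discarding the nonnegative remainder $(c_1+\tfrac12)\int_0^T\kappa(\psi_u-\psi^\ast_u)^2du$ leaves the linear term $(1+2c_1)\int_0^T\kappa(u)\big(\psi^\ast_u-\tfrac{A_i}{1+c_1}\gamma(u)\big)(\psi_u-\psi^\ast_u)\,du$, which by Fubini (both paths start from $0$ at $0-$) and the definition of $q$ equals $-\int_{[0,T]}q\,d(\psi-\psi^\ast)$; on the other hand $\lambda(\theta^\uparrow_T+\theta^\downarrow_T)\ge\int_{[0,T]}q\,d\psi$ for any $\|q\|_\infty\le\lambda$, with equality at $\psi^\ast$ by (ii), so $\lambda$ times the difference of total variations is $\ge\int_{[0,T]}q\,d(\psi-\psi^\ast)$. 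Adding the two, the $q$-terms cancel and $F(\psi)-F(\psi^\ast)\ge0$; taking $\E[\,\cdot\mid\sF_{i,0}]$ proves~(a).

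The main obstacle is (i)--(ii): computing $q$ from~\eqref{def:theta} and proving $\|q\|_\infty\le\lambda$ is exactly where the forms of $\tau$ in~\eqref{def:tau} and $\chi$ in~\eqref{def:chi} are used — the frozen level in~\eqref{def:theta} is tuned so that $\int_\tau^T\kappa(u)\big(\psi^\ast_u-\tfrac{A_i}{1+c_1}\gamma(u)\big)du=-\tfrac{\lambda\,\sgn(A_i)}{1+2c_1}$, forcing $q\equiv\lambda\,\sgn(A_i)$ on $[0,\tau]$; on $[\tau,T]$, $\psi^\ast$ stays on one side of the nondecreasing target, so $q$ is monotone there between $q(\tau)=\lambda\sgn(A_i)$ and $q(T)=0$, whence $|q|\le\lambda$; the jump of $\psi^\ast$ at $\tau$ has sign $\sgn(A_i)$ because $\tau$ belongs to its (decreasing-limit-closed) defining set and $\gamma$ is c\`adl\`ag nondecreasing, so the defining strict inequality for $t<\tau$ passes to the limit; and when $\chi\le\lambda$ (which forces $\tau=0$ and $\psi^\ast\equiv 0$) the bound $|q|\le\lambda$ on all of $[0,T]$ is literally $\chi\le\lambda$. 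The remaining steps are routine: $\theta_i\in\sA_i$ (adaptedness is clear since $\psi^\ast$ is $\sF_{i,0}$-measurable, and $\E\int_0^T(\sigma_t\theta_{i,t})^2dt<\infty$ follows from $\E[a_i^2]<\infty$, $\E\int_0^T\sigma_u^2du<\infty$ and independence), verifying that $\chi\le\lambda$ forces $\tau=0$ so the third branch of~\eqref{def:theta} is consistent, and handling the trivial case $A_i=0$.
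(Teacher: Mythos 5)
Your proposal is correct and follows essentially the same route as the paper: your adjoint $q$ is exactly the paper's process $Y_{i}$, your conditions (i)--(ii) are the paper's bounds $Y_{i,t}\in \sgn(A_i)[0,\lambda]$ together with its identity $\int_{0-}^T Y_{i,u}\,d\theta_{i,u}=\lambda(\theta_{i,T}^\uparrow+\theta_{i,T}^\downarrow)$, and your ``complete the square, discard the quadratic remainder, cancel via Fubini/integration by parts'' step is the paper's chain of inequalities for $V_i(\theta)-V_i(\theta_i)$. The only differences are presentational (pathwise convex-duality framing and the normalization $\psi=\theta-\tfrac n2$ rather than working with $V_i$ under conditional expectation).
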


For a given transaction cost parameter $\lambda>0$, Theorem~\ref{thm:main} shows that equilibrium is not unique.  Indeed, there exists a distinct equilibrium for every choice of price impact coefficient $c_1>-\frac12$.


\subsection{Effects of frictions in equilibrium}
Both transaction costs and price impact affect equilibrium, and each friction has its own distinct modeling characteristics and equilibrium effects.  Both frictions penalize the agents through their wealth reduction.  Transaction costs do so by directly subtracting transaction fees from wealth, while price impact does so indirectly by depressing the stock's drift with each increase in the number of shares held.

From a modeling perspective, agents are held accountable for transaction costs in equilibrium through market clearing, and their wealth decreases as a result. Price impact frictions appear only as {\it perceived} changes in off-equilibrium asset prices and wealth in the individual optimization problems.  Price impact is not incorporated explicitly into the market clearing condition, but the perceived prices must align with the realized prices in equilibrium by Definition~\ref{def:eq}(c).

Equilibrium effects of the two frictions are similar in that more frictions lead to less trade.  However, each friction has its own mechanism by which it impacts equilibrium outcomes.  First, we consider the effects of price impact, and we suppose that the level of transaction costs is fixed.  Higher (and positive) levels of the price impact coefficient $c_1$ cause the agents to trade less, while also causing a linear decrease in the equilibrium stock price.  The decrease in the perceived stock prices transfers over to the realized equilibrium stock price $\Shat$ in \eqref{def:S-hat}.  The case when $c_1=0$ corresponds to a price-taking equilibrium, in which the agents' perceived stock dynamics are not impacted by trade.

The equilibrium effects from transaction costs are perhaps more subtle.  When trade occurs in equilibrium, the trading strategies are unaffected by transaction costs.  Transaction costs determine how long agents are willing to trade by affecting $\tau$ in~\eqref{def:tau}.  Larger values of $\lambda$ produce smaller values of $\tau$, meaning that agents are not willing to continue trading if the penalty incurred by transactions costs is sufficiently large.

Moreover, the on-equilibrium stock price $\Shat$ in~\eqref{def:S-hat} is unaffected by transaction costs.  The apparent lack of an effect for $\Shat$ occurs because the on-equilibrium stock price can only be uniquely determined when trade occurs.  When trade does not occur, as is the case at the end of the trading period under transaction costs, $\Shat$ is consistent with equilibrium in that the agents will still agree not to trade using the price $\Shat$.  See D\'avila~\cite{D20wp} and Weston~\cite{W18MAFE} for a similar phenomenon in equilibrium models with transaction costs.

\section{Optimal transaction cost}\label{section:optimal-lambda} 
The agents pay transaction costs at a rate $\lambda>0$ proportional to the number of shares traded.  As the agents lose the wealth paid out in transaction costs, an exchange collects the fees.  Higher values of $\lambda$ mean that the exchange will receive more fees with every share traded.  However, higher values of $\lambda$ cause the agents to stop trading sooner, resulting in fewer shares traded.  Given a distributional estimate on the trading targets (i.e., priors), the exchange can find a strictly positive optimal transaction cost proportion that maximizes its expected profit.

In this section, we make the dependence on $\lambda$ explicit in the notation by denoting $\theta_i=\theta_i^{(\lambda)}$ and $\tau=\tau(\lambda)$.  We fix a price impact coefficient $c_1>-\frac12$. By Theorem~\ref{thm:main}, an equilibrium exists, and the total profit received by the exchange in that equilibrium is given by
$$
  \text{Profit}(\lambda)
    :=\lambda\left(\left(\theta_{1,T}^{(\lambda)}\right)^\uparrow
      + \left(\theta_{1,T}^{(\lambda)}\right)^\downarrow
      + \left(\theta_{2,T}^{(\lambda)}\right)^\uparrow
      + \left(\theta_{2,T}^{(\lambda)}\right)^\downarrow\right).
$$
By market clearing 
and the monotonicity of the optimal trading strategies, we see that
$$
  \text{Profit}(\lambda) = 2\lambda\left\lvert\theta_{1,T}^{(\lambda)} - \frac{n}{2}\right\rvert.
$$
The exchange will not have advanced knowledge of the agents' trading targets, and instead it must estimate the targets by using its ex ante priors.

Proposition~\ref{prop:optimal-lambda} asserts that a strictly positive optimal transaction cost level exists for the exchange.  The proof of Proposition~\ref{prop:optimal-lambda} is presented in Section~\ref{section:proofs}.
\begin{proposition}\label{prop:optimal-lambda}
Let $c_1>-\frac12$ be given.  Suppose that $\kappa:(0,T)\rightarrow(0,\infty)$ is measurable with $\int_0^T\kappa(u)du<\infty$ and that $\gamma:[0,T]\rightarrow[0,1]$ is c\`adl\`ag, nonnegative, bounded by one, nondecreasing, and there exists $t\in[0,T)$ so that $\gamma(t)>0$.   Also, suppose that $0<\E\left[A_1^2\right]<\infty$.  Then, there exists $\hatlambda>0$ so that
$$
  \hatlambda \in \text{Argmax} \left\{\lambda>0 :\, \E\left[\text{Profit}(\lambda)\right]\right\}.
$$
\end{proposition}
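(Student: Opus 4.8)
The plan is to study $f(\lambda):=\E[\text{Profit}(\lambda)]=2\lambda\,\E\big|\theta_{1,T}^{(\lambda)}-\tfrac n2\big|$ as a function on $(0,\infty)$, to show that it extends to a continuous function on the compactification $[0,\infty]$ which vanishes at both endpoints, to produce one $\lambda$ with $f(\lambda)>0$, and then to conclude by compactness: the extended $f$ attains its maximum, and since that maximum exceeds the common endpoint value $0$, it is attained at an interior point $\hatlambda\in(0,\infty)$, which is then the required maximizer. Everything rests on two consequences of the formulas in Theorem~\ref{thm:main}. First, the third branch of \eqref{def:theta} forces $\theta_{1,T}^{(\lambda)}=n/2$ on $\{\chi\le\lambda\}$, so $\{\theta_{1,T}^{(\lambda)}\neq n/2\}\subseteq\{\chi>\lambda\}$, and by \eqref{def:chi} this event equals $\{|A_1|>\lambda/\beta\}$, where $\beta:=\tfrac{1+2c_1}{1+c_1}\int_0^T\kappa(u)\gamma(u)\,du$ is strictly positive (and finite) because $c_1>-\tfrac12$, $\kappa>0$, and $\gamma$ is nondecreasing with $\gamma(t)>0$ for some $t<T$. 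Second, inserting $\gamma\le1$ and the inequality $Q(\tau(\lambda))\le\lambda$ into the middle branch of \eqref{def:theta} yields the pointwise bound $\big|\theta_{1,T}^{(\lambda)}-\tfrac n2\big|\le\tfrac{|A_1|}{1+c_1}\gamma(\tau(\lambda))\,\mathbf 1_{\{\chi>\lambda\}}\le\tfrac{|A_1|}{1+c_1}$ for every $\lambda>0$; here $Q(t):=|A_1|\tfrac{1+2c_1}{1+c_1}\int_t^T\kappa(u)(\gamma(u)-\gamma(t))\,du$ is the (right-continuous, hence infimum-attaining) map from \eqref{def:tau}, and the same computation shows that $\theta_{1,T}^{(\lambda)}-\tfrac n2=\tfrac{A_1}{1+c_1}\gamma(\tau(\lambda))$ on $\{\chi>\lambda\}$ whenever $Q$ is continuous at $\tau(\lambda)$.

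The two endpoint limits are immediate. As $\lambda\downarrow0$, $f(\lambda)\le\tfrac{2\lambda}{1+c_1}\,\E|A_1|\to0$, using $\E|A_1|\le(\E[A_1^2])^{1/2}<\infty$. As $\lambda\to\infty$, the fact that $\chi>\lambda$ forces $\lambda\le\beta|A_1|$ gives $f(\lambda)=2\lambda\,\E\big[\,\big|\theta_{1,T}^{(\lambda)}-\tfrac n2\big|\,\mathbf 1_{\{\chi>\lambda\}}\big]\le\tfrac{2\beta}{1+c_1}\,\E\big[A_1^2\,\mathbf 1_{\{|A_1|>\lambda/\beta\}}\big]\to0$ by dominated convergence (using $\E[A_1^2]<\infty$); the same bound shows $f$ is bounded on $(0,\infty)$.

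To exhibit a point with $f>0$ I would let $\lambda\downarrow0$. Then $\tau(\lambda)$ increases, the subtracted $\lambda$-term in the middle branch of \eqref{def:theta} disappears, and the $\kappa$-weighted average of $\gamma$ over $[\tau(\lambda),T]$ tends to $\gamma(T-):=\sup_{t<T}\gamma(t)$, so $\theta_{1,T}^{(\lambda)}-\tfrac n2\to\tfrac{A_1}{1+c_1}\gamma(T-)$ pointwise. Since $\gamma$ is nondecreasing with $\gamma(t)>0$ for some $t<T$ we have $\gamma(T-)>0$, and since $\E[A_1^2]>0$ we have $\P(A_1\neq0)>0$; dominated convergence (dominating function $\tfrac{|A_1|}{1+c_1}$) then gives $\E\big|\theta_{1,T}^{(\lambda)}-\tfrac n2\big|\to\tfrac{\gamma(T-)}{1+c_1}\,\E|A_1|>0$, so $f(\lambda)=2\lambda\,\E\big|\theta_{1,T}^{(\lambda)}-\tfrac n2\big|>0$ for all sufficiently small $\lambda>0$.

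It remains to show that $f$ is continuous on $(0,\infty)$ and to conclude. Continuity follows from dominated convergence --- on any compact $[\lambda_1,\lambda_2]\subset(0,\infty)$ the integrand $\text{Profit}(\lambda)=2\lambda\big|\theta_{1,T}^{(\lambda)}-\tfrac n2\big|$ is dominated by $\tfrac{2\lambda_2}{1+c_1}|A_1|\in L^1$ --- once one knows that $\lambda\mapsto\theta_{1,T}^{(\lambda)}(\omega)$ is continuous for a.e.\ $\omega$. This is the technical heart of the argument and the step I expect to be the main obstacle, since it requires controlling the random last-trading time $\tau(\lambda)$ together with the case split in \eqref{def:theta}. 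On $\{\chi<\lambda_0\}$ the map is locally constant equal to $n/2$; on $\{\chi=\lambda_0\}$ both one-sided limits equal $0$ because $\theta_{1,T}^{(\lambda)}-\tfrac n2\to0$ as $\lambda\uparrow\chi$, matching $\theta_{1,T}^{(\lambda_0)}=n/2$; and on $\{\chi>\lambda_0\}$ one rewrites the middle branch of \eqref{def:theta} via $Q(\tau(\lambda))=\lambda$ and verifies continuity in $\lambda$, the only delicate features being the countably many instants at which $\gamma$ (equivalently $Q$) jumps, where $\tau(\lambda)$ is pinned to that instant over a $\lambda$-interval and $\theta_{1,T}^{(\lambda)}$ interpolates linearly, and the intervals on which $\gamma$ is flat, where $\tau(\lambda)$ jumps but $\gamma(\tau(\lambda))$, hence $\theta_{1,T}^{(\lambda)}$, does not. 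Granting this, the extension of $f$ to $[0,\infty]$ by $f(0)=f(\infty)=0$ is continuous on a compact space and hence attains its maximum; that maximum is at least the strictly positive value of the previous paragraph and therefore exceeds the endpoint value $0$, so it is attained at some $\hatlambda\in(0,\infty)$. Then $f(\hatlambda)=\sup_{\lambda>0}f(\lambda)$, i.e.\ $\hatlambda\in\text{Argmax}\{\lambda>0:\E[\text{Profit}(\lambda)]\}$, which completes the proof.
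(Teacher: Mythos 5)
Your overall architecture is exactly the paper's: show $\lambda\mapsto\E[\mathrm{Profit}(\lambda)]$ is continuous on $(0,\infty)$, vanishes in the limits $\lambda\downarrow 0$ and $\lambda\to\infty$, and is strictly positive somewhere, then conclude by compactness. The endpoint limits and the boundedness/domination arguments are correct and essentially identical to the paper's (the paper uses the threshold $\lambdamax=2|A_1|\int_0^T\kappa$ where you use $\chi>\lambda\Rightarrow\lambda\le\beta|A_1|$), and your positivity argument via the $\lambda\downarrow 0$ limit is more detailed than the paper's one-line assertion, which is fine.

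The genuine gap is the step you yourself flag as ``the technical heart'': continuity of $\lambda\mapsto\theta^{(\lambda)}_{1,T}(\omega)$. Your treatment of the branch $\{\chi>\lambda_0\}$ is a qualitative sketch ending in ``Granting this,'' and the case taxonomy you propose (isolated jumps of $\gamma$ where $\tau(\lambda)$ is pinned and $\theta^{(\lambda)}_{1,T}$ interpolates linearly, versus flat stretches where $\tau(\lambda)$ jumps but $\gamma(\tau(\lambda))$ does not) is not a proof for a general nondecreasing c\`adl\`ag $\gamma$ and merely measurable $\kappa$: jumps can accumulate, $\gamma$ can have a singular continuous part, and the identity $Q(\tau(\lambda))=\lambda$ you want to ``rewrite via'' fails precisely at the jumps, so the argument as stated does not close. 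The paper avoids all of this with a single quantitative estimate: writing $\theta^{(\lambda)}_{1,T}-\tfrac n2=\tfrac{A_1}{1+c_1}\tilde\gamma(\tau(\lambda),\lambda)$ and subtracting the defining identities $\lambda_j C=\int_{\tau_j}^T\kappa(u)\bigl(\gamma(u)-\tilde\gamma(\tau_j,\lambda_j)\bigr)\,du$ for $\lambda_1<\lambda_2$, then using only that $\gamma$ is nondecreasing, it obtains
\[
  0\;\le\;\tilde\gamma(\tau_1,\lambda_1)-\tilde\gamma(\tau_2,\lambda_2)\;\le\;\frac{C}{\int_{\tau_1\vee\tau_2}^T\kappa(u)\,du}\,(\lambda_2-\lambda_1),
\]
i.e.\ monotonicity plus a local Lipschitz bound that is uniform in $\omega$ on $\{\underline\lambda\le\lambda\le\overline\lambda\}$; taking expectations then gives continuity of $\E[\mathrm{Profit}(\cdot)]$ directly, with no case analysis and no appeal to pointwise continuity plus dominated convergence. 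You would need either to supply this estimate (or an equivalent) or to genuinely carry out and complete your case analysis; as written, the proof is incomplete at its central step.
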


\begin{example} To illustrate the exchange's choice of optimal transaction cost level, we consider an example with TWAP traders.  We take $T:=1$, $n:=100$, $\kappa(t):=1$, $\gamma(t):=\gamma^{\text{TWAP}}(t) = t$, and we allow the price impact coefficient $c_1>-\frac12$ to be arbitrary.  For a given transaction cost level $\lambda> 0$, we have that
$$
  \text{Profit}(\lambda) = 2 \max\left(0, \frac{\lambda A_1}{1+c_1}-\lambda\sqrt{\frac{2\lambda A_1}{(1+c_1)(1+2c_1)}}\right).
$$
Each agent begins trading with $50$ shares.  The exchange estimates that agent 1 will seek to obtain a targeted number of shares that is uniformly distributed on $(50,55)$, while it believes that agent 2 will target exactly $50$ shares.  The exchange's expected profit can be computed by
$$
  \E\left[\text{Profit}(\lambda)\right] = \frac{8}{5} \int_0^{\sqrt{2.5}}\frac{\lambda y^2}{1+c_1}\max\left(0,y-\sqrt{\frac{2\lambda(1+c_1)}{1+2c_1}}\right)dy.
$$

Figure~\ref{fig:profit} plots the exchange's expected profit as a function of $\lambda$.  The four plots vary in the degree of the agents' perceived price impact, which is measured by the parameter $c_1$.  The case $c_1=0$ corresponds to a price-taking equilibrium with no price impact.  When $c_1>0$, the agents perceive a non-zero level of price impact.

\begin{figure}
  \center\includegraphics[width=\textwidth]{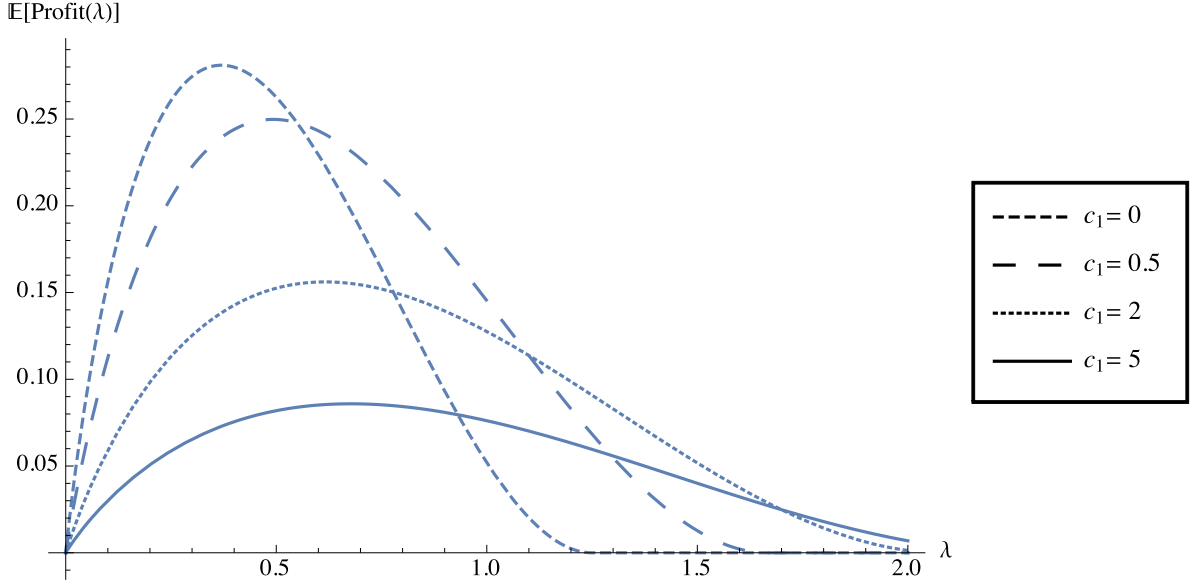}
  \caption{The exchange's expected profit in equilibrium is plotted as a function of the transaction cost $\lambda>0$.  Each plot corresponds to a different level of price impact, which is measured by varying the perceived price impact parameter $c_1>-\frac12$.}
  \label{fig:profit}
\end{figure}
\end{example}

In this example, the exchange's choice of optimal transaction cost increases with increasing price impact parameter $c_1$.  However, the exchange's optimal expected profit decreases as the perceived price impact increases.


\section{Proofs}\label{section:proofs}
We begin with a proof of Lemma~\ref{lemma:walras}, which provides a version of Walras' Law in our setting.
\begin{proof}[Proof of Lemma~\ref{lemma:walras}]
Let $\lambda>0$ be given, and let $\theta_1$, $\theta_2$, and $\Shat$ be parameters in a price impact equilibrium satisfying Definition~\ref{def:eq}.  Conditions (b) and (c) in Definition~\ref{def:eq} guarantee that the consumption market clears at $T$ by
$$
  \theta_{1,T}\Shat_T +\theta_{2,T}\Shat_T 
    = \left(\theta_{1,T}+\theta_{2,T}\right) D
    = nD.
$$
Moreover, the money market clears for all times $t\in[0,T]$ by
\begin{align*}
  \theta^{(0)}_{1,t} &+ \theta^{(0)}_{2,t}\\
  &= 2\cdot\frac{n}{2}\Shat_0 
    + \int_0^t \left(\theta_{1,u}+\theta_{2,u}\right) d\Shat_u 
    - \lambda\left(\theta_{1,t}^\uparrow+\theta_{1,t}^\downarrow+\theta_{2,t}^\uparrow+\theta_{2,t}^\downarrow\right) - \left(\theta_{1,t}+\theta_{2,t}\right)\Shat_t\\
  &= n\Shat_0 + \int_0^tn\, d\Shat_u - n\Shat_t
    - \lambda\left(\theta_{1,t}^\uparrow+\theta_{1,t}^\downarrow+\theta_{2,t}^\uparrow+\theta_{2,t}^\downarrow\right)\\
  &= - \lambda\left(\theta_{1,t}^\uparrow+\theta_{1,t}^\downarrow+\theta_{2,t}^\uparrow+\theta_{2,t}^\downarrow\right).
\end{align*}
\end{proof}

Next, we prove our main result, Theorem~\ref{thm:main}.
\begin{proof}[Proof of Theorem~\ref{thm:main}]
  We break the proof up into steps. 
  
  \noindent{\it Step 1: Information sets.} We first verify that $\sF_{1,t} = \sF_{2,t} = \sigma(a_1, a_2, \Shat_u\!:\,0\leq u\leq t)$ for all $t\in[0,T]$.  Since $\Shat_0$ given in~\eqref{def:S-hat} reveals the value of $a_\Sigma = a_1+a_2$, then for any $i\in\{1,2\}$, 
  $$
   \sF_{i,t} = \sigma\left(a_i,\Shat_u\!:\, 0\leq u\leq t\right)
     = \sigma\left(a_1, a_2, \Shat_u\!:\, 0\leq u\leq t\right),
     \ \ \ \ t\in[0,T],
  $$
  as desired.  For the remainder of the proof, we let $\bF= \{\sF_t\}_{t\in[0,T]}$ denote the shared filtration for the agents so that $\sF_{1,t}=\sF_{2,t}=\sF_{t}$ for all $t\in[0,T]$.
  
  \noindent{\it Step 2: Admissibility.} We next show for $i=1,2$, that $\theta_i$ defined in \eqref{def:theta} is admissible; that is, $\theta_i\in\sA_i$.  The function $\gamma$ is c\`adl\`ag and nondecreasing, while $\frac{A_i}{1+c_1}$ and $\tau$ are $\sF_0$-measurable.  To verify integrability, we use the square-integrability of $A_i$ and $\sigma$ as well as their independence to ensure that for some constant $C>0$, we have
  \begin{align*}
    \E\int_0^T \left(\sigma_{u}\theta_{i,u}\right)^2du
    &\leq C\, \E\int_0^T \sigma_{u}^2 \left(1+A_i^2\right) du\\
    &\leq C\, \E\left[1+A_i^2\right]\, \E\int_0^T \sigma_{u}^2du\\
    &<\infty.
  \end{align*}
  Thus, $\theta_i\in\sA_i$.
  
  \noindent{\it Step 3: Optimality.} 
  Next, we check that $\theta_i\in\sA_i$ is optimal in \eqref{eqn:optimality}. 
  In order to help us with computations, we define $\tilde\gamma$ for $t\in[0,T]$ by
\begin{equation}\label{def:gamma-tilde}
  \tilde\gamma(t):=
  \begin{cases}
    \gamma(t), & t<\tau,\\
    \frac{\int_\tau^T \kappa(u)\gamma(u)du-\frac{\lambda(1+c_1)}{|A_1(1+2c_1)|}}{\int_\tau^T\kappa(u)du}, & t\geq\tau \text{ and } \chi>\lambda,\\
    0, & t\geq\tau \text{ and } \chi\leq\lambda.
  \end{cases}
\end{equation}
The definition of $\tilde\gamma$ in comparison with $\theta_{i}$ in~\eqref{def:theta} shows that $\theta_{i,t} = \frac{n}{2} + \frac{A_i}{1+c_1}\,\tilde\gamma(t)$.  Since $A_1=0$ implies that $\chi=0$ and since $\tau<T$, we have that $\tilde\gamma$ is well-defined in the $t\geq\tau$ and $\chi>\lambda$ case.  

We define $Y_{i} = \{Y_{i,t}\}_{t\in[0,T]}$ by
$$
  Y_{i,t}:= \E\left[\int_t^T \kappa(u) \left(c_0(u) + \frac{n}{2} + \gamma(u)(1+c_2)\left(a_i-\frac{n}{2}\right) - (1+2c_1)\theta_{i,u}\right)du\,\vert\,\sF_0\right].
$$
With the form of $c_0$ and $c_2$ in \eqref{def:cs} and $\theta_i$ given in~\eqref{def:theta}, we see that
\begin{align*}
  Y_{i,t} &= A_i\frac{1+2c_1}{1+c_1}\int_t^T \kappa(u)\left(\gamma(u)-\tilde\gamma(u)\right)du.
\end{align*}
By the definition of $\tau$ in \eqref{def:tau} and $\chi$ in~\eqref{def:chi}, we have that $Y_{i,t} = \lambda\,\sgn A_i$ for $t\leq\tau$ and $\chi\geq\lambda$, while $Y_{i,t}\in~\sgn A_i[0,\lambda]$ for all $t\in[0,T]$ and all values of $\chi$.  
Since $\theta_i$ is constant after $\tau$, we see that
\begin{equation}\label{eqn:FOC-calc}
  \int_{0-}^T Y_{i,u} d\theta_{i,u} 
  = \lambda \left(\theta_{i,T}^\uparrow+\theta_{i,T}^\downarrow\right).
\end{equation}
Then, for any $\theta\in\sA_i$, 
\begin{align}
  \frac{A_i(1+2c_1)}{1+c_1}&\int_0^T\kappa(u)\left(\gamma(u)-\tilde\gamma(u)\right)\left(\theta_u-\theta_{i,u}\right)du - \lambda\left(\theta_T^\uparrow+\theta_T^\downarrow\right) \nonumber\\
  &= \int_{0-}^T Y_{i,u}d\left(\theta-\theta_i\right)_u 
     - \lambda\left(\theta_T^\uparrow+\theta_T^\downarrow\right) 
     \ \ \ \text{by integration by parts}\nonumber\\
  &= \int_{0-}^T Y_{i,u}d\theta_u 
     -\lambda\left(\theta_{i,T}^\uparrow+\theta_{i,T}^\downarrow\right) 
     - \lambda\left(\theta_T^\uparrow+\theta_T^\downarrow\right) 
     \ \ \ \text{by \eqref{eqn:FOC-calc}}\nonumber\\
  &= \int_{0-}^T\underbrace{\left(Y_{i,u}-\lambda\right)}_{\leq 0}d\theta^\uparrow_u 
     - \int_{0-}^T\underbrace{\left(\lambda+Y_{i,u}\right)}_{\geq 0}d\theta^\downarrow_u 
     -\lambda\left(\theta_{i,T}^\uparrow+\theta_{i,T}^\downarrow\right) \nonumber\\
  &\leq -\lambda\left(\theta_{i,T}^\uparrow+\theta_{i,T}^\downarrow\right).
  \label{eqn:FOC-calc-2}
\end{align}

  For an arbitrary $\theta\in\sA_i$, we define $V_i(\theta)$ by
  \begin{equation}\label{eqn:V}
    V_i(\theta):= \E\left[X^\theta_{i,T} - L^\theta_{i,T}\,\vert\,\sF_{0}\right],
  \end{equation}
  where $L^\theta_{i,T}$ is defined in~\eqref{def:L}, $X^\theta_{i,T}$ in~\eqref{def:wealth}, and the perceived off-equilibrium price dynamics in~\eqref{eqn:price-impact} have initial stock price and martingale parts given by the proposed equilibrium stock price $\Shat$ in~\eqref{def:S-hat}.
  
  By the definition of $\Shat$ in \eqref{def:S-hat}, the martingale part of $\Shat$, $S_1^\theta$, and $S_2^\theta$ is given by the martingale $\int_0^\cdot \sigma dB$.  Since $\theta\in\sA_i$ is adapted and $\E\int_0^T\left(\sigma_{u}\theta_u\right)^2du<\infty$, we have that $\left\{\int_0^t \sigma_{u}\theta_udB_u\right\}_{t\in[0,T]}$ is a square-integrable martingale with respect to $\bF$. We use this property in the calculation of $V_i(\theta)$ below.
  
  For $c_0$ and $c_2$ given in \eqref{def:cs} and the initial stock price given by $\Shat_0$ in~\eqref{def:S-hat}, we have
  \begin{align*}
    &V_i(\theta)= \E\left[X^\theta_{i,T} - L^\theta_{i,T}\,\vert\,\sF_0\right]\\
    &= \frac{n}{2} \Shat_0 - \lambda\E\left[\theta_T^\uparrow+\theta_T^\downarrow\,\vert\,\sF_0\right] \\
    &\ \ \ \ \ + \E\left[\int_0^T \kappa(t)\left(\left(c_0(t)+\gamma(t)c_2\left(a_i-\frac{n}{2}\right)\right)\theta_t -c_1\theta_t^2 -\frac12\left(\theta_t-\frac{n}{2}-\gamma(t)\left(a_i-\frac{n}{2}\right)\right)^2\right)dt\,\vert\,\sF_0\right]\\
    &= \frac{n}{2} \Shat_0 - \lambda\E\left[\theta_T^\uparrow+\theta_T^\downarrow\,\vert\,\sF_0\right] -\frac12 \E\left[\int_0^T\kappa(t)\left(\frac{n}{2}+\gamma(t)\left(a_i-\frac{n}{2}\right)\right)^2dt\,\vert\,\sF_0\right] \\
    &\ \ \ \ \ + \E\left[\int_0^T \kappa(t)\left(\left(\left(c_1+\frac{1}{2}\right)n+\gamma(t)\frac{(1+2c_1)A_i}{1+c_1}\right)\theta_t - \left(c_1+\frac12\right)\theta_t^2\right)dt\,\vert\,\sF_0\right].
  \end{align*}
  The above calculation reveals that when optimizing over $\theta$ in~\eqref{eqn:V}, the second-order condition requires that $c_1>-\frac12$.

  For notational convenience, we define
$$
  \alpha(t):= \left(c_1+\frac{1}{2}\right)n+\gamma(t)\frac{(1+2c_1)A_i}{1+c_1}
  \ \ \ \text{ and } \ \ \ 
  \beta:= c_1+\frac12,
$$
so that
\begin{align*}
  V_i(\theta) 
    &= \frac{n}{2} \Shat_0 - \lambda\E\left[\theta_T^\uparrow+\theta_T^\downarrow\,\vert\,\sF_0\right] \\
    &\ \ \ \ \ -\frac12 \int_0^T\kappa(t)\left(\frac{n}{2}+\gamma(t)\left(a_i-\frac{n}{2}\right)\right)^2 dt
  + \E\left[\int_0^T \kappa(t)\left(\alpha(t)\theta_t - \beta\theta_t^2\right)dt\,\vert\,\sF_0\right].
\end{align*}
Therefore, by algebraic manipulation and applying \eqref{eqn:FOC-calc-2},
\begin{align*}
  V_i&(\theta)-V_i(\theta_i)\\
  &= \E\left[\lambda\left(\theta_{i,T}^\uparrow+\theta_{i,T}^\downarrow-\theta_{T}^\uparrow-\theta_{T}^\downarrow\right)
    +\int_0^T \kappa(u)\left(\alpha(u)(\theta_u-\theta_{i,u})-\beta(\theta_u-\theta_{i,u})(\theta_u+\theta_{i,u})\right)du \right]\\
  &= \E\left[\lambda\left(\theta_{i,T}^\uparrow+\theta_{i,T}^\downarrow-\theta_{T}^\uparrow-\theta_{T}^\downarrow\right)
    +\int_0^T \kappa(u)\left((\alpha(u)-2\beta\theta_{i,u})(\theta_u-\theta_{i,u})-\beta(\theta_u-\theta_{i,u})^2\right)du \right]\\
  &\leq \E\left[\lambda\left(\theta_{i,T}^\uparrow+\theta_{i,T}^\downarrow-\theta_{T}^\uparrow-\theta_{T}^\downarrow\right)
    +\int_0^T \kappa(u)(\alpha(u)-2\beta\theta_{i,u})(\theta_u-\theta_{i,u})du \right]\\
  &= \E\left[\lambda\left(\theta_{i,T}^\uparrow+\theta_{i,T}^\downarrow-\theta_{T}^\uparrow-\theta_{T}^\downarrow\right)
    +\frac{A_i(1+2c_1)}{1+c_1}\int_0^T \kappa(u)\left(\gamma(u)-\tilde\gamma(u)\right)(\theta_u-\theta_{i,u})du \,\vert\,\sF_0\right]\\
  &\leq \E\left[\lambda\left(\theta_{i,T}^\uparrow+\theta_{i,T}^\downarrow\right) - \lambda\left(\theta_{i,T}^\uparrow+\theta_{i,T}^\downarrow\right)\,\vert\,\sF_0\right] 
  \ \ \ \ \text{by \eqref{eqn:FOC-calc-2}}\\
  &\leq 0.
\end{align*}
Thus, $\theta_i$ is optimal for agent $i$. 

\noindent{\it Step 4: Markets clear.} For $t< \tau$, the stock market clears since $\theta_{1,t}+\theta_{2,t} = n + A_1 + A_2 = n$.  For $t\geq\tau$ and $\chi\leq\lambda$, we have $\theta_{1,t}+\theta_{2,t} = n$.  Similarly, for $t\geq\tau$ and $\chi>\lambda$, the stock market clears since
\begin{align*}
  \theta_{1,t}&+\theta_{2,t}\\
  &= n+\frac{1}{\int_\tau^T\kappa(u)du}\left\{\int_\tau^T\kappa(u)\gamma(u)\frac{A_1+A_2}{1+c_1}du - \frac{\lambda\left(\sgn(A_1)+\sgn(A_2)\right)}{1+2c_1}\right\}\\
  &= n+0 + 0 = n.
\end{align*}

\noindent{\it Step 5: Equilibrium prices are consistent.} By the martingale representation of the dividend $D$, we notice that $\Shat_T=D$, as desired.  Finally, we verify that for $t\in\{u\in[0,T]:\, d\theta_{1,u}^\uparrow + d\theta_{1,u}^\downarrow >0\}$, we have that
\begin{equation}\label{eqn:S-equals}
  S^{\theta_1}_{1,t} = S^{\theta_2}_{2,t} = \Shat_t.
\end{equation}
Since $t\in\{u\in[0,T]:\, d\theta_{1,u}^\uparrow + d\theta_{1,u}^\downarrow >0\} \subseteq [0,\tau]$, it is sufficient to verify that \eqref{eqn:S-equals} holds for $t\in[0,\tau]$.

For $i\in\{1,2\}$ and $t\in(0,\tau)$, the drift of $S_{i,t}^{\theta_i}$ is given by
\begin{align*}
  \text{drift}\left(S_{i,t}^{\theta_i}\right)
  &= \kappa(t)\left(\frac{c_1n}{2} - \frac{1}{2}\gamma(t)\left(a_\Sigma-n\right)\right),
\end{align*}
which agrees with the drift of $\Shat$ in \eqref{def:S-hat}.  Since the martingale components and initial values of $\Shat$, $S_1^\theta$, and $S_2^\theta$ do not depend on the choice of $i=1,2$, or $\theta\in\sA_i$, we conclude that \eqref{eqn:S-equals} holds for all $t\in[0,\tau]$, as desired.

\end{proof}

Next, we prove Proposition~\ref{prop:optimal-lambda}, which establishes the existence of an optimal transaction cost, $\hatlambda>0$.
\begin{proof}[Proof of Proposition~\ref{prop:optimal-lambda}]
Theorem~\ref{thm:main} establishes the existence of an equilibrium with optimal trading strategies $\theta_i^{(\lambda)}\in\sA_i$ given in~\eqref{def:theta} for every $\lambda>0$.

As in the proof of Theorem~\ref{thm:main}, we introduce the function $\tilde\gamma$ for $t\in[0,T]$ and $\lambda>0$ by
\begin{equation*}
  \tilde\gamma(t,\lambda):=
  \begin{cases}
    \gamma(t,\lambda), & t<\tau(\lambda),\\
    \frac{\int_{\tau(\lambda)}^T \kappa(u)\gamma(u)du-\frac{\lambda(1+c_1)}{|A_1(1+2c_1)|}}{\int_{\tau(\lambda)}^T\kappa(u)du}, & t\geq\tau(\lambda) \text{ and } \chi>\lambda,\\
    0, & t\geq\tau(\lambda) \text{ and } \chi\leq\lambda.
  \end{cases}
\end{equation*}
Here, $\chi$ is defined in~\eqref{def:chi}, and we make $\lambda$'s dependence on $\tilde\gamma$ explicit.  For any $\lambda>0$, we have that
$$
  \theta_{1,T}^{(\lambda)}-\frac{n}{2} = \theta_{1,\tau(\lambda)}^{(\lambda)}-\frac{n}{2}
  = \frac{A_1}{1+c_1} \tilde\gamma(\tau(\lambda),\lambda).
$$
We first seek to show that $\lambda\mapsto\tilde\gamma(\tau(\lambda),\lambda)$ is continuous, from which we will conclude that $\lambda\mapsto\text{Profit}(\lambda)$ is continuous.  On $\{A_1=0\}$, $\chi=0$ so that $\chi<\lambda$ for all $\lambda>0$, and thus, $\lambda\mapsto\tilde\gamma(\tau(\lambda),\lambda)$ is continuous in this case.

On $\{A_1\neq 0\}$, we consider $0<\lambda_1<\lambda_2$.  Then, $\tau(\lambda_2)\leq \tau(\lambda_1)< T$.  For notational simplicity, we define $C:= \frac{1+c_1}{|A_1|(1+2c_1)}$ and $\tau_1:=\tau(\lambda_1)$, $\tau_2:=\tau(\lambda_2)$.  We have that
\begin{align*}
  \lambda_1 C
  &= \int_{\tau_1}^T \kappa(u)\Big(\gamma(u)-\tilde\gamma(\tau_1,\lambda_1)\Big)du \\
  &= \int_{\tau_1}^{\tau_2} \kappa(u)\Big(\gamma(u)-\tilde\gamma(\tau_1,\lambda_1)\Big)du \\
  &\ \ \ \ \ \ \ \ \ \ \  + \underbrace{\int_{\tau_2}^T \kappa(u)\Big(\gamma(u)-\tilde\gamma(\tau_2,\lambda_2)\Big)du}_{=\lambda_2 C} 
     + \int_{\tau_2}^{T} \kappa(u)\Big(\tilde\gamma(\tau_2,\lambda_2)-\tilde\gamma(\tau_1,\lambda_1)\Big)du \\
  &= \int_{\tau_2}^{\tau_1} \kappa(u)\Big(\tilde\gamma(\tau_1,\lambda_1)-\gamma(u)\Big)du + \lambda_2 C
     + \Big(\tilde\gamma(\tau_2,\lambda_2)-\tilde\gamma(\tau_1,\lambda_1)\Big) \int_{\tau_2}^T\kappa(u)du.
\end{align*}
Using that $\gamma$ is nondecreasing, we see that $\tilde\gamma(\tau_2,\lambda_2)\leq\gamma(u)$ for all $u\in(\tau_2,\tau_1)$.  By rearranging terms, we arrive at
\begin{align*}
  0 &\leq \Big(\tilde\gamma(\tau_1,\lambda_1)-\tilde\gamma(\tau_2,\lambda_2)\Big) \int_{\tau_2}^T\kappa(u)du \\
  &= C(\lambda_2-\lambda_1) + \int_{\tau_2}^{\tau_1} \kappa(u)\Big(\tilde\gamma(\tau_1,\lambda_1)-\gamma(u)\Big)du \\
  &\leq C(\lambda_2-\lambda_1) +  \Big(\tilde\gamma(\tau_1,\lambda_1)-\tilde\gamma(\tau_2,\lambda_2)\Big)\int_{\tau_2}^{\tau_1}\kappa(u)du.
\end{align*}
Therefore, on $\{A_1\neq 0\}$, we have that
\begin{equation}\label{eqn:estimate}
  0 \leq \tilde\gamma(\tau_1,\lambda_1)-\tilde\gamma(\tau_2,\lambda_2)
  \leq \frac{C}{\int_{\tau_1\vee\tau_2}^T \kappa(u)du}(\lambda_2-\lambda_1),
\end{equation}
from which we conclude that $\lambda\mapsto\tilde\gamma(\tau(\lambda),\lambda)$ is continuous for $\lambda>0$.  Thus, $\lambda\mapsto \theta^{(\lambda)}_{1,T} - \frac{n}{2}$ and $\lambda\mapsto \text{Profit}(\lambda)$ are continuous for $\lambda>0$.

Next, we show that $\lambda\mapsto\E[\text{Profit}(\lambda)]$ is continuous.  For $0<\lambda_1<\lambda_2$, we fix constants $\underline\lambda, \overline\lambda$ so that $0<\underline\lambda\leq \lambda_1<\lambda_2\leq \overline\lambda$.  We let $\tau_1:=\tau(\lambda_1)$ and $\tau_2:=\tau(\lambda_2)$, as before.  Using~\eqref{eqn:estimate} and that $\gamma$ is nondecreasing, we see that
\begin{align*}
  \Big\vert \text{Profit}(\lambda_1)&- \text{Profit}(\lambda_2)\Big\vert
  = 2\Bigg\vert 
    \lambda_1\left\lvert\theta_{1,\tau_1}^{(\lambda_1)}-\frac{n}{2}\right\rvert 
    - \lambda_2\left\lvert\theta_{1,\tau_2}^{(\lambda_2)}-\frac{n}{2}\right\rvert
    \Bigg\vert \\
  &\leq 2\lambda_2\left\lvert \theta_{1,\tau_1}^{(\lambda_1)}-\theta_{1,\tau_2}^{(\lambda_2)}\right\rvert 
    + 2\left(\lambda_2-\lambda_1\right)\left\lvert \theta_{1,\tau_1}^{(\lambda_1)}-\frac{n}{2}
  \right\rvert \\
  &\leq 2\left(\frac{\overline\lambda}{(1+2c_1)\int_{\tau(\underline\lambda)}^T\kappa(u)du}+\frac{|A_1|}{1+c_1}\gamma(T)\right)\left(\lambda_2-\lambda_1\right).
\end{align*}
Taking expectations shows that for all $0<\underline\lambda\leq \lambda_1<\lambda_2\leq\overline\lambda<\infty$, we have
$$
  \E\Big[\left\lvert \text{Profit}(\lambda_1)- \text{Profit}(\lambda_2)\right\rvert\Big]
  \leq 2\left(\frac{\overline\lambda}{(1+2c_1)\int_{\tau(\underline\lambda)}^T\kappa(u)du}+\frac{\E[|A_1|]}{1+c_1}\gamma(T)\right)\left(\lambda_2-\lambda_1\right),
$$
which proves that $\lambda\mapsto\E[\text{Profit}(\lambda)]$ is continuous for $\lambda>0$.

 Finally, we show that $\lambda\mapsto\E[\text{Profit}(\lambda)]$ achieves a maximum for $\lambda>0$.  The assumption that $\E[A_1^2]>0$ and that there exists $t\in[0,T)$ for which $\gamma(t)>0$ ensures that there exists $\lambda>0$ such that $\E[\text{Profit}(\lambda)]>0$.  Since $\tilde\gamma$ is bounded by one, we have that
 $$
   0\leq \lim_{\lambda \downarrow 0} \E\left[\text{Profit}(\lambda)\right]
   \leq \lim_{\lambda \downarrow 0} 2\lambda\ \E\left[\frac{|A_1|}{1+c_1}\right]
   = 0.
 $$
 We define the random variable $\lambdamax := 2|A_1| \int_0^T \kappa(u)du$.  For $\lambda\geq\lambdamax$, we have that $\chi<\lambda$, where $\chi$ is defined in~\eqref{def:chi}. In this case, trade does not occur, and $\left\lvert\theta^{(\lambda)}_{1,\tau(\lambda)} - \frac{n}{2}\right\rvert = 0$.  
For any $\lambda>0$, we bound $\text{Profit}(\lambda)$ by
\begin{align*}
  \text{Profit}(\lambda)
  &\leq \frac{2\lambda |A_1|}{1+c_1} \gamma(T)\bI_{\{\lambda\leq\lambdamax\}} \\
  &\leq \frac{2\lambdamax |A_1|}{1+c_1}\\
  &\leq A_1^2 \cdot\frac{4\int_0^T\kappa(u)du}{1+c_1}.
\end{align*}
Since $\E[A_1^2]<\infty$, we apply the dominated convergence theorem to obtain that
$$
  \lim_{\lambda\rightarrow\infty} \E[\text{Profit}(\lambda)] = 0.
$$
Thus, $\lambda\mapsto\E[\text{Profit}(\lambda)]$ achieves a maximum for $\lambda>0$.

%
 
\end{proof}

\bibliographystyle{plain}
\bibliography{finance_bib}

\end{document}